\def\venue{arxiv}
\newcommand{\omitted}[1]{\ifthenelse{\equal{\venue}{arxiv}}{#1}{The proof can be found in
the extended version of this paper~\cite{BPT09}.}}
\newcommand{\be}{\begin{equation}}
\newcommand{\ee}{\end{equation}}
\newcommand{\ba}{\begin{array}}
\newcommand{\ea}{\end{array}}
\newcommand{\bea}{\begin{eqnarray}}
\newcommand{\eea}{\end{eqnarray}}
\newcommand{\calC}{{\cal C }}
\newcommand{\calH}{{\cal H }}
\newcommand{\calE}{{\cal E }}
\newcommand{\calT}{{\cal T }}
\newcommand{\calS}{{\cal S }}
\newcommand{\calR}{{\cal R }}
\newcommand{\ZZ}{\mathbb{Z}}
\newcommand{\la}{\langle}
\newcommand{\ra}{\rangle}
\newcommand{\nn}{\nonumber}
\newcommand{\trace}{\mathop{\mathrm{Tr}}\nolimits}
\newtheorem{dfn}{Definition}
\newtheorem{lemma}{Lemma}
\newtheorem{prop}{Proposition}
\newtheorem{corol}{Corollary}
\newtheorem{fact}{Fact}
\begin{document}

\title{Tradeoffs for reliable quantum information storage in 2D systems}

\author{Sergey \surname{Bravyi}}
\affiliation{IBM Watson Research
Center,  Yorktown Heights  NY 10598, USA}
\author{David  \surname{Poulin}}
\affiliation{D\'epartement de Physique, Universit\'e de  Sherbrooke, Qu\'ebec, Canada}
\author{Barbara \surname{Terhal}}
\affiliation{IBM Watson Research
Center,  Yorktown Heights  NY 10598, USA}

\date{\today}

\begin{abstract}
We ask whether there are fundamental limits on storing quantum information
reliably in a bounded volume of space. To investigate this question,
we study quantum error correcting codes specified by geometrically
local commuting constraints on a 2D lattice of finite-dimensional
quantum particles. For these 2D systems, we derive a tradeoff
between the number of encoded qubits $k$, the distance of the code
$d$, and the number of particles $n$. It is shown that $kd^2=O(n)$
where the coefficient in $O(n)$ depends only on the locality of the
constraints and dimension of the Hilbert spaces describing
individual particles. We show that the analogous tradeoff for the
classical information storage is $k\sqrt{d} =O(n)$.
\end{abstract}

\pacs{03.67.Pp, 03.67.Ac, 03.65.Ud}

\maketitle

Understanding the limits imposed on information processing  by the
laws of physics is a problem of fundamental and practical
importance. A variety of hardware-independent limitations on the
power of computers arising from thermodynamics, quantum mechanics,
and relativity  have been identified
recently~\cite{Landauer88,Bennett85,Lloyd00}.

In this Letter we derive a fundamental upper bound on the amount of
quantum information that can be stored reliably in a given volume of
a 2D space. This bound stems from {\em geometric locality} of
quantum operations used to detect and correct errors as well as
peculiar features of {\em quantum entanglement} in 2D systems.
 We shall model the information storage using the framework of quantum error correcting codes~\cite{NielsenChuaung2000}.
Specifically, we consider a system of $n$ finite-dimensional quantum particles (qudits)
occupying sites of a 2D lattice $\Lambda$.
For the sake of clarity we shall consider a regular square lattice of size $\sqrt{n}\times \sqrt{n}$
with open boundary conditions, although  our results can be easily extended to more general
2D lattices and periodic boundary conditions.
 We  shall focus on codes for which the codespace $\calC$ spanned by encoded states can be
represented as a common eigenspace of geometrically local pairwise commuting~\footnote{The commutativity
of the constraints specifying a code is a highly desirable property since it allows one to use error correction
algorithms based on the syndrom measurement.}
projectors $\Pi_1,\ldots,\Pi_m$ such that
\be
\label{C}
\calC=\{ |\psi\ra\, : \, \Pi_a \, |\psi\ra =|\psi\ra \quad \mbox{for all $a$}\}.
\ee
The codespace $\calC$ can be regarded
as the ground-state subspace of a local gapped Hamiltonian
\be
\label{Hcode}
H=-\sum_{a=1}^m \Pi_a, \quad \Pi_a \Pi_b=\Pi_b \Pi_a.
\ee
Such a code is able to encode $k=\log_2 \dim{\calC}$ logical qubits.
Let $d$ be the distance of the code~\footnote{Recall that a code has
distance $d$ iff any measurement performed on any subset of less
than $d$ particles reveals no information about an encoded state.
Such a code protects encoded information against $\lfloor
(d-1)/2\rfloor$ single-particle errors in the worst-case scenario.}.
Our main result in an upper bound \be \label{quantum} k\le \frac{c\,
n}{d^2}. \ee Here $c$ is a constant coefficient that depends only on
locality of the projectors defining the codespace and dimension of
the Hilbert space describing individual particles.
 The bound Eq.~(\ref{quantum}) is tight
up to a constant factor since
2D surface codes~\cite{BK98} achieve the scaling  $kd^2 \sim n$
for any given $n$ and $d$~\footnote{Note that a 2D surface code
defined on a lattice with two smooth and two rough boundary regions
encodes one qubit into approximately $d^2$ physical qubits, see~\cite{BK98}.  In order to encode
$k$ qubits one can use $k$ independent copies of the surface code placed next to each other
on a plane. It requires roughly $kd^2$ physical qubits.}.
The bound Eq.~(\ref{quantum}) can be put in sharp contrast with the
existence of good stabilizer codes~\cite{CS:good} for which $k/n
\geq c_1$ and $d/n \geq c_2$ for some constants $c_1,c_2$. Our
result implies that the distance of 2D quantum codes with a non-zero
rate $k/n$ is upper bounded by a constant independent of $n$.
 It also implies that the distance of any 2D quantum code is at most $O(\sqrt{n})$
 extending the results of~\cite{BT09} beyond stabilizer codes.


The motivation for our work stems from several sources. Firstly,
quantum error correcting codes provide toy models for how
topological quantum order (TQO) can emerge in the ground states of
2D spin systems with short-range interactions. For example,
string-net models introduced by Levin and Wen~\cite{LevinWen03} are
described by Hamiltonians involving a sum of commuting projectors,
see~\cite{KRV08}. The ground state of string-net models defined on a
torus (or higher genus surface) has topological degeneracy and can
be regarded as a codespace of a quantum code. Alternatively, the
codespace can be chosen as an excited subspace corresponding to a
particular configuration of excitations (anyons) ---  the approach
adopted by Kitaev in the topological quantum computing
scheme~\cite{Kitaev97}. In this case the code distance is
proportional to the distance between anyons while the bound
Eq.~(\ref{quantum}) asserts  that the number of encoded qubits is at
most a constant fraction of the number of anyons.

Secondly, one can interpret Eq.~(\ref{quantum}) as a tradeoff
between {\em degeneracy} and {\em stability} that must be obeyed by
ground states of the code Hamiltonian $H$. Assuming that $H$ is
translation-invariant, one has a stable zero temperature phase in
the thermodynamic limit if the degeneracy of the ground state cannot
be lifted by weak local perturbations  below some critical value of
the perturbation parameter. It is well known that adding a weak
local perturbation to $H$ lifts the degeneracy of the ground state
only in order $\Omega(d)$ of perturbation theory~\cite{Kitaev97}.
Thus a necessary condition for $T=0$ stability is that the distance
$d$ must be  infinite in the thermodynamic limit. Then the  tradeoff
Eq.~(\ref{quantum}) implies that the {\em amount of quantum
information stored per unit volume}, $k/n \leq c/d^2$ goes to zero
in the thermodynamic limit. This suggests a possible connection
between our results and the celebrated {\em holographic principle}
asserting that the amount of  information that can be encoded in a
volume of space $M$ scales as the area of the boundary of $M$.

Generalizing our techniques to quantum codes defined on a
$D$-dimensional lattice  yields \be \label{quantum'} k\le
\frac{cn}{d^{\alpha}}, \quad \alpha = \frac2{D-1}. \ee As was shown
in Ref.~\cite{BT09}, the distance of any $D$-dimensional stabilizer
code satisfies the bound $d\leq O(n^{(D-1)/D})$. Since the upper bound
Eq.~(\ref{quantum'}) permits codes with $k=O(1)$ and  $d\sim
n^{1/\alpha} \sim n^{(D-1)/2}$, it cannot be tight for all values of
$n$ and $d$ unless $D=2$. Using the folded surface code
construction~\cite{Kitaev09} one can construct a $D$-dimensional
stabilizer code encoding $1$ qubit into $n$ qubits with the distance
$d\sim \sqrt{n}$. To the best of our knowledge there are no examples
of $D$-dimensional codes for which the distance grows faster than
$\sqrt{n}$. Therefore one cannot exclude the possibility that the
bound Eq.~(\ref{quantum}) holds for any spatial dimension, although
we consider this to be unlikely.

  It should be emphasized that
throughout this paper the  geometric locality of the constraints
$\Pi_a$ is defined using the standard Euclidean
geometry~\footnote{More strictly, our analysis applies to the
regular $D$-dimensional cubic lattice with open or periodic boundary
conditions.}. At the same time, the bound Eq.~(\ref{quantum}) can be
violated for non-Euclidean geometry. For example,
Ref.~\cite{zemor:surface} constructed surface codes on general
planar graphs with a constant rate $k/n$ and the distance $d\sim
\log{n}$, see also~\cite{BMD:rate}. Also, it is known that
stabilizer codes with $k=1$ and $d\sim \sqrt{n} \log{n}$ can be
constructed on triangulations of some 4D Riemannian surfaces, see
Theorem~12.4 in Ref.~\cite{FML:systole}.



%


We note that even though our results cover a large family of 2D
quantum codes on qudits beyond the standard family of stabilizer
codes, they do not include the important family of quantum
subsystem codes \cite{Pou05b, Bac05a}.

One can also ask about the analogue of the tradeoff
Eq.~(\ref{quantum}) for classical information storage. In Appendix~A
we prove that any 2D classical code specified by geometrically local
constraints obeys the bound
 \be \label{classical} k\le \frac{c \, n}{\sqrt{d}} \ee
Here $c$ is a constant depending only on the dimension of individual particles
and locality of the constraints specifying the code.
Using the mapping from 1D cellular automatons to 2D classical codes
we construct a family of codes with $k\sim \sqrt{n}$
and $d\sim n^{0.8}$ which is quite close to saturating the bound Eq.~(\ref{classical}).

{\bf Definitions and notations.}
We shall assume that the locality of the projectors $\Pi_a$ can be
characterized by a constant interaction range $w$ such that the
support of any projector $\Pi_a$ can be covered by a square block of
size $w\times w$. Let
 \be
  \label{PI} \Pi=\prod_{a=1}^m \Pi_a
 \ee
be the projector on the codespace $\calC$.
  A state $\rho$ is called an {\em encoded state} iff it
has support on the codespace $\calC$, that is, $\Pi\rho=\rho\Pi=\rho$.
We shall say that
a region $M\subseteq \Lambda$ is {\em correctable} iff
there exists an error correction operation (a
trace preserving completely positive map) $\calR$ that corrects the erasure
of all particles in $M$, that is,  for any encoded state $\rho$ one has
\be \label{EC} \calR(\trace_M \rho )=\rho.
\ee
By definition of the distance any region of size smaller  than $d$ is
correctable.

We shall use the notation $\bar{M}=\Lambda\backslash M$ for the
complement of a region $M$.  For any region $M\subseteq
\Lambda$ and for any fixed state $\rho$ let $S(M)=-\trace \rho_M
\log \rho_M$ be the von Neumann entropy of the reduced density
matrix $\rho_M$.
Using techniques from Ref.~\cite{SN96} one can easily show
 that the error correction condition Eq.~(\ref{EC})
has the following entropic counterpart.
\begin{fact}
If a region $M$ is correctable then \be \label{EC2}
S(M|\bar{M})=-S(M) \ee for any encoded state $\rho$. Here
$S(M|\bar{M})=S(M\bar{M})-S(\bar{M})$ is the entropy
of $M$ conditioned on $\bar{M}$. \label{fact:correct}
\end{fact}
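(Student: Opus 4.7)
The plan is to prove Fact~1 by purifying the encoded state and invoking the Schumacher--Nielsen characterization of correctable regions, then extracting the claimed entropic identity from pure-state entropy relations.

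First, I would introduce a reference system $R$ with Hilbert space isomorphic to $\calC$ and a pure state $|\Psi\ra_{R\Lambda}$ whose reduced state on $\Lambda$ equals the given encoded state, i.e.\ $\trace_R |\Psi\ra\la \Psi| = \rho$. Since $\rho$ is supported on $\calC$, such a purification can be chosen so that the marginal on $R$ has the same spectrum as $\rho$ and every Schmidt vector on the $\Lambda$ side lies in $\calC$.

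Next, I would invoke the Schumacher--Nielsen criterion~\cite{SN96}: the erasure of $M$ can be corrected on \emph{every} encoded state if and only if, in the purification $|\Psi\ra_{R\Lambda}$ of any faithful encoded state (or equivalently of our chosen $\rho$), the reduced state on the joint system $RM$ factorizes,
\be
\label{decoup}
\rho_{RM}=\rho_R\otimes \rho_M.
\ee
This is the decoupling form of quantum error correction: existence of $\calR$ satisfying Eq.~(\ref{EC}) is equivalent to the reference being statistically independent of the erased region in the purification. This is precisely the step that ``uses techniques of Ref.~\cite{SN96}'' and is the main content of the fact; I would cite it rather than reprove it.

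Once Eq.~(\ref{decoup}) is in hand, the identity follows from pure-state entropy manipulations. Decoupling gives $S(RM)=S(R)+S(M)$. Purity of $|\Psi\ra_{R\Lambda}$ yields $S(R)=S(\Lambda)=S(M\bar M)$ and $S(RM)=S(\bar M)$. Substituting,
\be
S(\bar M) \;=\; S(M\bar M) + S(M),
\ee
which rearranges to $S(M|\bar M)=S(M\bar M)-S(\bar M)=-S(M)$, exactly Eq.~(\ref{EC2}).

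The only conceptually nontrivial step is the decoupling characterization Eq.~(\ref{decoup}); the rest is bookkeeping with entropies of a pure bipartite/tripartite state. Since the paper explicitly grants this step to Ref.~\cite{SN96}, the proof reduces to the short chain of entropic identities above.
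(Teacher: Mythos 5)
Your proposal is correct: decoupling $\rho_{RM}=\rho_R\otimes\rho_M$ plus purity of $|\Psi\ra_{R\Lambda}$ gives $S(\bar M)=S(RM)=S(R)+S(M)=S(M\bar M)+S(M)$, which is exactly Eq.~(\ref{EC2}). The difference from the paper's argument is where the nontrivial content sits. You take the decoupling form of the Schumacher--Nielsen criterion as a black box; the paper instead proves the needed step from scratch: it purifies $\rho$ with a system $C$, notes that the erasure $\calT=\trace_M\otimes\,\mathrm{id}$ followed by the recovery $\calR$ restores both $\rho_{M\bar M C}$ and $\rho_{M\bar M}\otimes\rho_C$, and invokes monotonicity of relative entropy under both maps~\cite{Uhlmann77} to conclude $\calS(\rho_{M\bar M C}||\rho_{M\bar M}\otimes\rho_C)=\calS(\rho_{\bar M C}||\rho_{\bar M}\otimes\rho_C)$, i.e.\ $I(C:M\bar M)=I(C:\bar M)$; evaluating this for the pure state $\rho_{M\bar M C}$ yields Eq.~(\ref{EC2}). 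The two routes are equivalent in content --- for a globally pure state $I(R:M)=I(R:M\bar M)-I(R:\bar M)$, so your $I(R:M)=0$ is precisely the paper's preserved-mutual-information identity --- but the paper's version is self-contained modulo Uhlmann's theorem, whereas yours leans on the cited criterion (which is fine, given that the paper itself defers to Ref.~\cite{SN96}). One shared gloss worth flagging: the definition of correctability, Eq.~(\ref{EC}), only demands recovery of encoded density matrices on $\Lambda$, while both your decoupling step and the paper's line $\calR\circ\calT(\rho_{M\bar M C})=\rho_{M\bar M C}$ require recovery to succeed in the presence of the purifying reference; this follows by the standard argument that a channel acting as the identity on all states of a subspace has Kraus operators proportional to the identity on that subspace, so it is not a gap, but a fully rigorous write-up should say so.
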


Note that the equality Eq.~(\ref{EC2}) holds automatically for any pure state of $M\bar{M}$
which would correspond to a trivial code with $k=0$. More generally, Eq.~(\ref{EC2}) implies  that
there exists a (virtual) partition  $\bar{M}=AB$ such that any encoded state $\rho$ is a tensor product of
some fixed pure state held by $MA$ and some state of $B$ depending on $\rho$ \cite{HJPW03}.

\omitted{
\begin{proof}
Let $\rho_{M\bar{M}}$ be any encoded state and
$\rho_{M\bar{M}C}$ be its purification. Define an error
$\calT=\trace_M\otimes \mathrm{id}_{\bar{M}C}$ erasing the
region $M$. By assumption there exists a recovery operation $\calR$
such that
\[
\calR\circ \calT (\rho_{M\bar{M} C}) = \rho_{M\bar{M} C},
\quad \calR\circ \calT (\rho_{M\bar{M}}\otimes \rho_C) =
\rho_{M\bar{M}}\otimes \rho_C.
\]
Therefore \be \label{rel_ent}
\calS(\rho_{M\bar{M}C}||\rho_{M\bar{M}}\otimes \rho_C) =
\calS(\calT(\rho_{M\bar{M}C})||\calT(\rho_{M\bar{M}}\otimes
\rho_C)) \ee since the relative entropy is monotone decreasing under
quantum operations, see~\cite{Uhlmann77}. Using the definition of
$\calT$ one can rewrite Eq.~(\ref{rel_ent}) as \be\label{EC2'}
\calS(\rho_{M\bar{M}C}||\rho_{M\bar{M}}\otimes
\rho_C)=S(\rho_{\bar{M}C}||\rho_{\bar{M}}\otimes \rho_C).
\ee Taking into account that $\rho_{M\bar{M}C}$ is a pure
state, one can check that  Eq.~(\ref{EC2'}) is equivalent to Eq.~(\ref{EC2}).
\end{proof}
}


We begin by sketching the steps leading up to our main result,
the bound in Eq.~(\ref{quantum}). Let $R$ be the largest integer
$m$ such that any square block of size $m\times m$ is correctable.
Note that $R$ is at least $\sqrt{d}$ by the definition of the
distance.

 Consider a partition of the lattice $\Lambda=ABC$ shown in
Fig.~\ref{fig:ABC}. The regions $A$ and $B$ consist of blocks of
size $R\times R$, so that each individual block in $A$ and $B$ is
correctable.
The total number of blocks is roughly $n/R^2$.
The regions $A$ and $B$ have small corner regions taken
out which make up the region $C$.
The purpose of the region $C$ is
to provide a sufficiently large separation between the neighboring
blocks in $A$ and between the neighboring blocks in $B$ such that
any projector $\Pi_a$ overlaps with at most one block in $A$ and
with at most one block in $B$. It guarantees that the entire regions
$A$ and $B$ are correctable (see Lemma~\ref{lemma:union} below).
Applying Eq.~(\ref{EC2}) to regions $A$ and $B$ yields \be
\label{cond_ent} S(A|BC)=-S(A) \quad \mbox{and} \quad S(B|AC)=-S(B)
\ee for any encoded state. Let $\rho$ be the maximally mixed encoded
state such that $k=S(\Lambda)$. Using Eq.~(\ref{cond_ent}) we get
\bea \label{term1} S(\Lambda)&=&S(BC)+S(A|BC)
= S(BC) - S(A) \nn \\
& \le &  S(C)+ S(B)-S(A).
\eea
Similarly
\bea
\label{term2}
S(\Lambda)&=& S(AC)+S(B|AC) = S(AC) - S(B) \nn \\
&\le&  S(C) + S(A) - S(B). \eea Adding together
Eqs.~(\ref{term1},\ref{term2}) yields \be \label{k_upper}
k=S(\Lambda) \le S(C) \le |C| \sim
\frac{n}{R^2}. \ee The second step in the proof which may be
less intuitive is to show that $R\ge cd$ for some constant $c$
depending only on locality of the constraints. In other words, we
need to prove that any block of size roughly $d\times d$ is
correctable. Our main technical tool will be the {\em Disentangling
Lemma} characterizing entanglement properties of the maximally mixed
encoded state proportional to the projector on the codespace $\Pi$.
We shall prove that any correctable region $M$ can be completely
disentangled from the rest of the lattice by acting only on the
boundary of the region (see Lemma~\ref{lemma:disent} below). The
disentangling operation leaves the region $M$ in a pure state, so
that all entropy of $M$ can be ``cleaned out" by acting along the
boundary of $M$. This result can be regarded  as a generalization of
the Cleaning Lemma from~\cite{BT09} beyond stabilizer codes.
For any region $M$ let $\partial M$ be the boundary of $M$, that is,
the region covered by the supports of all projectors $\Pi_a$ that
couple $M$ with $\bar{M}$. The following result is a simple
corollary of the Disentangling Lemma.
\begin{corol}
\label{cor:recursion} Let $M$ be any correctable region. Consider
any regions $B\subseteq M$ and $C\subseteq \bar{M}$ such that
$BC$ is correctable  and $\partial M \subseteq BC$. Then $M\cup C$
is also correctable.
\end{corol}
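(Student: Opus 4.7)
The plan is to promote correctability of $BC$ to correctability of the enlarged region $M\cup C$ using the Disentangling Lemma applied to $M$. First I would apply the Disentangling Lemma to the correctable region $M$ to obtain an operation $\calD_M$ supported on $\partial M$ such that, for every encoded state $\rho$, one has $\calD_M(\rho)=\sigma_M\otimes \tau_{\bar M}(\rho)$ with $\sigma_M$ a fixed (pure) state on $M$ independent of $\rho$.

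The next step is to unpack the hypothesis $\partial M\subseteq BC$. Since $B\subseteq M$ and $C\subseteq \bar M$, one has $BC\cap\bar M=C$, hence $\partial M\cap\bar M\subseteq C$; combined with $\partial M\subseteq BC$, this means that $\calD_M$ acts trivially on both $\overline{BC}$ and $\overline{M\cup C}=\bar M\setminus C$. Consequently, tracing $BC$ out of $\calD_M(\rho)$ and using the disjoint decomposition $\overline{BC}=(M\setminus B)\sqcup \overline{M\cup C}$, I would obtain the key factorization
\[
\trace_{BC}(\rho)=\sigma_{M\setminus B}\otimes \rho_{\overline{M\cup C}},
\]
where $\sigma_{M\setminus B}:=\trace_B(\sigma_M)$ depends only on the code, while $\rho_{\overline{M\cup C}}$ is the genuine reduced state of $\rho$ on $\overline{M\cup C}$, left intact by $\calD_M$.

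Finally, let $\calR_{BC}$ be the recovery guaranteed by correctability of $BC$, so that $\calR_{BC}(\trace_{BC}\rho)=\rho$ for every encoded $\rho$. Given the erased input $\trace_{M\cup C}(\rho)=\rho_{\overline{M\cup C}}$, I would append the fixed ancillary state $\sigma_{M\setminus B}$ on $M\setminus B$ to reconstruct $\trace_{BC}(\rho)$ exactly, and then apply $\calR_{BC}$ to recover $\rho$. The resulting map $\calR_{M\cup C}(\,\cdot\,):=\calR_{BC}(\sigma_{M\setminus B}\otimes\,\cdot\,)$ is the desired recovery for $M\cup C$. The main obstacle is the careful accounting of supports: the hypothesis $\partial M\subseteq BC$ is doing double duty, simultaneously confining $\calD_M$ away from the domain of $\calR_{BC}$ and ensuring $\calD_M$ leaves the erased region $\overline{M\cup C}$ untouched so that its marginal really is the reduced state of the original $\rho$. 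Once this bookkeeping is in place, the rest of the argument is a direct composition of the two recoveries.
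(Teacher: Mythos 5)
Your proof is correct and follows essentially the same route as the paper's: apply the Disentangling Lemma to $M$ (its extra hypothesis that $\partial_+M$ be correctable holds automatically since $\partial_+M\subseteq BC$ and $BC$ is correctable), use the fact that the disentangling unitary is supported on $\partial M\subseteq BC$ to factorize $\trace_{BC}\rho$ as a fixed state on $M\setminus B$ tensored with the true marginal on $\overline{M\cup C}$, and compose with the recovery map for $BC$. The paper's own argument is identical up to notation ($A=M\setminus B$, $D=\overline{M\cup C}$, recovery $\rho=\calR(\eta_A\otimes\rho_D)$).
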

The idea of the proof is illustrated in Fig.~\ref{fig:recursion}.
Let us apply Corollary~\ref{cor:recursion} to a square block $M$ of
size $R\times R$. Choose $B$ and $C$ as layers of thickness
$w$ adjacent to the surface of $M$ such that $B\subseteq M$ and
$C\subseteq \bar{M}$, see Fig.~\ref{fig:recursion}. Since all
the projectors $\Pi_a$ have  size at most $w$,  the
condition $\partial M \subseteq BC$ is satisifed. Note that
$|BC|= cwR$ for some constant $c$. If $|BC|<d$ then $BC$ is
correctable and Corollary~\ref{cor:recursion} would imply that
$M\cup C$ is correctable. But $M\cup C$ is a square block of size
larger than $R$ which contradicts the choice of $R$. Thus
$|BC|\ge d$, that is, $R\ge d/(cw) \sim d$. Substituting this
bound into Eq.~(\ref{k_upper}) completes the proof of
Eq.~(\ref{quantum}).


Let us comment on how to extend this proof technique to $D$-dimensional
lattices. The partition $\Lambda=ABC$ of Fig.~\ref{fig:ABC}
should be chosen such that $A$ and $B$ consist of $D$-dimensional
cubes of linear size $R$.  Adjacent cubes in $A$ or $B$ overlap along
$(D-2)$-dimensional faces. Accordingly, the region $C$ is a union of all
$(D-2)$-dimensional faces (with a thickness of order $w$) over all blocks in $A$ and $B$.
Note that $|C|\sim n/R^2$, so we arrive at Eq.~(\ref{k_upper}).
Repeating the same arguments as above shows that a cubic-shaped
block $M\subseteq \Lambda$ is correctable if $|\partial M|<d$, that is,
$R^{D-1}\sim d$. Substituting it into Eq.~(\ref{k_upper}) leads to Eq.~(\ref{quantum'}).


\begin{center}
\begin{figure}[htb]
\centerline{
\mbox{
 \includegraphics[height=3cm]{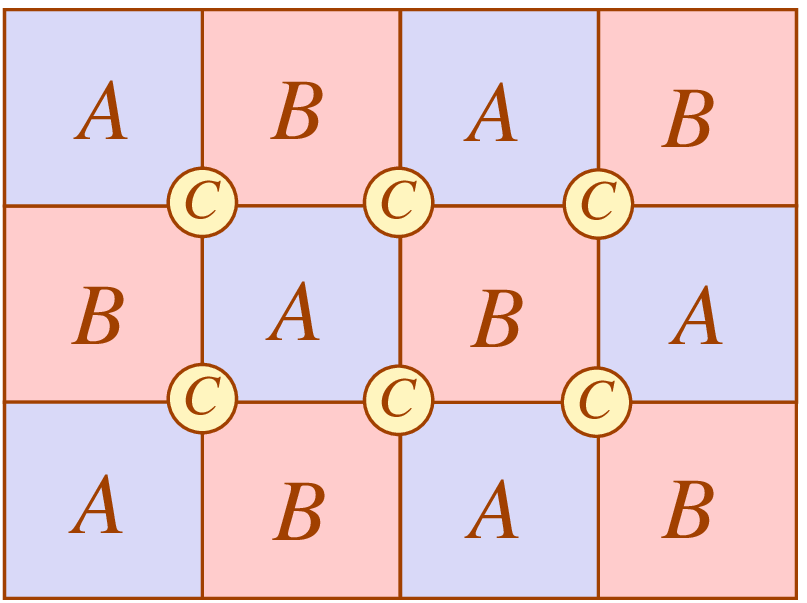}
 }}
\caption{The partition of the lattice $\Lambda=ABC$.
Each individual block in $A$ and $B$ must be correctable.
The region $C$ provides separation between adjacent blocks in $A$ and adjacent blocks in $B$.
It guarantees that the entire regions $A$ and $B$ are correctable.
The entropic error correction condition implies that
 $S(A|BC)=-S(A)$ and $S(B|AC)=-S(B)$ for the maximally mixed encoded state.
It yields $k=S(ABC)\le S(C)$. }
 \label{fig:ABC}
\end{figure}
\end{center}

In the rest of the paper we state and prove the Disentangling Lemma,
provide a formal proof of Corollary~\ref{cor:recursion}, and prove that a union 
of correctable sets that are sufficiently far from each other is also a correctable set,
see Lemma~\ref{lemma:union}.
\begin{dfn}
Let $M\subseteq \Lambda$ be any region. Define the
external boundary $\partial_+ M$ as a set of all sites $u\in
\bar{M}$ such that there is at least one projector $\Pi_a$
acting on both $u$ and $M$. Define the internal boundary as
$\partial_- M=\partial_+ \bar{M}$. Finally, define $\partial M
=
\partial_- M \cup \partial_+ M$.
\end{dfn}

\begin{center}
\begin{figure}[htb]
\centerline{
\mbox{
 \includegraphics[height=3cm]{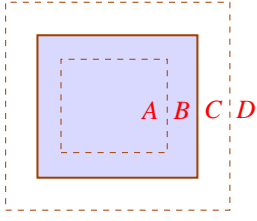}
 }}
\caption{Extending the correctability from a region $AB$ to a larger
region $ABC$. The Disentangling Lemma implies that any encoded state
$\rho$ can be represented as $\rho=U_{BC} (\eta_{AB} \otimes
\eta_{CD}) U_{BC}^\dag$, where $\eta_{AB}$ is a pure state
independent of $\rho$. It implies $\trace_C \eta_{CD}=\rho_D$ and
thus $\eta_{A} \otimes \rho_D =\calE(\rho)$, where $\calE$ is an
`error' erasing the region $BC$. If $BC$ is correctable, one must be able to reconstruct
$\rho$ starting from $\calE(\rho)$. Since $\eta_{AB}$ is known, it
means that one can reconstruct $\rho$ starting from $\rho_D$.
Therefore $ABC$ is correctable.}
 \label{fig:recursion}
\end{figure}
\end{center}
\begin{lemma}[\bf Disentangling]
\label{lemma:disent}
Let $M\subseteq \Lambda$ be any correctable region.
Suppose the external boundary $\partial_+M$ is also a correctable region.
 Then there exists a unitary operator
$U_{\partial M}$ acting only on the boundary  $\partial M$ such that
\be \label{disent} U_{\partial M} \Pi U_{\partial M}^\dag =
|\phi_M\ra\la \phi_M| \otimes \Pi_{\bar{M}}. \ee for some pure
state $|\phi_M\ra$ and some projector $\Pi_{\bar{M}}$.
\end{lemma}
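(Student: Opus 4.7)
My plan is to prove the Disentangling Lemma by combining the Koashi--Imoto / HJPW structure theorem cited after Fact~\ref{fact:correct} with the geometric locality of the commuting projectors $\Pi_a$.

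The starting point is correctability of $M$ together with Fact~\ref{fact:correct}, which yields $S(M|\bar{M})=-S(M)$ for the maximally mixed encoded state $\rho=\Pi/\dim\calC$. The structure theorem then gives a virtual decomposition $\bar{M}\cong A\otimes B$, realized by some unitary on $\bar{M}$, together with a fixed pure state $|\phi\ra_{MA}$ that purifies the marginal $\rho_M$, such that $\rho$ takes the form $|\phi\ra\la\phi|_{MA}\otimes\tau_B$ with $\tau_B$ proportional to the identity on its support. Equivalently, every codeword has the form $|\phi\ra_{MA}\otimes|b_i\ra_B$ after a fixed unitary on $\bar{M}$, and the logical information of the code resides in $B$.

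The crucial and most delicate step is to use the additional correctability of $\partial_+M$ to localize the purifier $A$ inside $\partial_+M$, and to realize the corresponding change of basis by a unitary supported on $\partial_+M$ alone. Applying the structure theorem to $\partial_+M$ produces an analogous decomposition $\overline{\partial_+M}=M\cup D\cong A'\otimes B'$ with $\rho_{\partial_+M}$ being the fixed marginal. The geometric input that makes the two decompositions compatible is that by the very definition of $\partial_+M$ no projector $\Pi_a$ straddles $M$ and $D=\bar{M}\setminus\partial_+M$: the full commuting family splits cleanly across $M\,|\,\partial_+M\,|\,D$. Combined with the fact that encoded information is expelled from $\partial_+M$ (by its correctability) while the purifier of $\rho_M$ must sit somewhere in $\bar{M}$ (from the first HJPW step), one can arrange the HJPW rotation to act only on $\partial_+M$ and realize $A$ as a virtual subsystem there. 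A symmetric argument on the $M$-side, using the same projector-splitting near $\partial_-M$, disentangles the pure state $|\phi\ra_{MA}$ into a product $|\phi_M\ra_M\otimes|\psi\ra_A$ via a unitary acting only on $\partial_-M\cup A\subseteq \partial M$. Composing these two boundary unitaries yields the required $U_{\partial M}$, and Eq.~(\ref{disent}) follows.

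The main obstacle I anticipate is precisely this localization step: the two HJPW virtual decompositions are defined only up to non-unique unitaries on overlapping regions (both $\bar{M}$ and $M\cup D$ contain $\partial_+M$), and making the two choices compatible --- so that $A$ is realized as a virtual subsystem of $\partial_+M$ and does not leak into $D$ --- requires careful bookkeeping anchored on the commuting-projector separation $M\,|\,\partial_+M\,|\,D$. Without the correctability hypothesis on $\partial_+M$ the argument would break down, since $A$ could then be spread arbitrarily over $\bar{M}$ and no boundary-supported disentangling unitary would exist.
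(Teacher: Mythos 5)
Your outline starts from the right place (Fact~\ref{fact:correct} plus the HJPW structure theorem applied to the maximally mixed encoded state), but it leaves unproved exactly the step that constitutes the entire content of the lemma: that the virtual subsystem purifying $\rho_M$ can be realized inside $\partial_+M$, and that the HJPW rotation can be chosen with support on $\partial_+M$ alone. You name this as ``the crucial and most delicate step'' and describe it as requiring ``careful bookkeeping anchored on the commuting-projector separation,'' but no argument is actually given. HJPW applied to $\bar{M}$ produces a decomposition and a unitary acting on all of $\bar{M}$, and nothing in your proposal forces that decomposition to be compatible with the tensor split $\bar{M}=\partial_+M\otimes D$; comparing it with a second HJPW decomposition obtained from the correctability of $\partial_+M$ does not by itself resolve the non-uniqueness you point out. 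The paper closes this gap with a different tool: Proposition~\ref{prop:2} (the Bravyi--Vyalyi structure lemma for commuting projectors), which, because every $\Pi_a$ is supported in $M\cup\partial_+M$ or in $\partial_+M\cup D$, yields a direct-sum decomposition $\calH_{\partial_+M}=\bigoplus_x \calH_{C'_x}\otimes\calH_{C''_x}$ with $\Pi=\bigoplus_x \Pi^{(x)}_{MC'_x}\otimes\Pi^{(x)}_{C''_xD}$; correctability of $\partial_+M$ then collapses the sum to a single term, giving a unitary supported on $\partial_+M$ by construction. This is a statement about the projector algebra, not about one state, and it is what makes the localization possible.

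The same gap recurs in your ``symmetric argument on the $M$-side.'' You assert that the fixed pure state on $M$ together with its purifier can be disentangled by a unitary on $\partial_-M\cup A$, but this again needs the structural decomposition (Proposition~\ref{prop:2} applied at $\partial_-M$, with the sum collapsed using correctability of $\partial_-M\subseteq M$), after which the paper still needs two further arguments you omit: correctability of $M$ forces the factor supported on the interior plus inner boundary to be rank one, and a no-cloning argument forces the residual factor straddling the inner and outer boundary pieces ($\Theta_{B''C'}$ in the paper's notation) to be one-dimensional, hence a pure state that a boundary unitary can disentangle. Without these ingredients the proposal is a plan rather than a proof: the hypotheses are invoked in the right places, but the mechanism that converts geometric locality of commuting projectors into a boundary-supported unitary is missing.
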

It follows from Eq.~(\ref{disent}) that $U_{\partial M}$
disentangles any encoded state $|\psi\ra\in \calC$, that is,
$U_{\partial M}\, |\psi\ra =|\psi_{in}\ra \otimes |\psi_{out}\ra$
where $|\psi_{in}\ra  = |\phi_M\ra$ is the same for all encoded
states $|\psi\ra$.
In particular, any encoded state $\rho$ obeys the entanglement area law,
that is, $S(M)\le |\partial M|$ for any correctable region $M$. In the case of trivial codes ($k=0$)
the codespace is one-dimensional and thus any region is correctable. It reproduces
the entanglement area law proved for ground states of local Hamiltonians with commuting interactions in Ref.~\cite{Wolf07}.
Note that the boundary $\partial M_+$ is
correctable whenever the size of the boundary is smaller than the
distance $d$.  One can easily check that this is the case for all
applications of the lemma used above.

\omitted{Finally, it is worth mentioning that $\Pi_{\bar{M}}$
might not be representable as a product of geometrically local
projectors because the unitary operator $U_{\partial M}$ might not
be locality preserving. Similarly, the state $|\phi_M\ra$ might lack
a representation in terms of local commuting projectors.}


The proof of the Disentangling Lemma is based on the following
well-known result~\cite{BV03}.
\begin{prop}
\label{prop:2}
Consider a tripartite system $ABC$ and let
$\Pi=\Pi_{AB} \Pi_{BC} = \Pi_{BC}\Pi_{AB}$ be a product of two commuting projectors
acting on $AB$ and $BC$ respectively.
 Then the Hilbert space of $B$ can be decomposed as
\be
\calH_B = \bigoplus_x \calH_{B_x'} \otimes \calH_{B_x''}
\ee
such that
\be
\Pi=\bigoplus_x \Pi_{AB_x'} \otimes \Pi_{B_x''C}
\ee
for some projectors $\Pi_{AB_x'}$ and $\Pi_{B_x''C}$.
\end{prop}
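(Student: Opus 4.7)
The plan is to extract from the commuting pair $\Pi_{AB}, \Pi_{BC}$ a pair of commuting finite-dimensional $*$-subalgebras of the operator algebra on $\calH_B$ and then invoke the Artin--Wedderburn structure theorem to obtain the claimed refinement of $\calH_B$.

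First I would expand $\Pi_{AB} = \sum_i X_i \otimes Y_i$ and $\Pi_{BC} = \sum_j Z_j \otimes W_j$ (each understood to act trivially on the third factor), using a Hermitian operator-Schmidt decomposition across the $A|B$ and $B|C$ cuts so that the $X_i$ on $\calH_A$ and the $W_j$ on $\calH_C$ are linearly independent and the $Y_i, Z_j$ on $\calH_B$ are self-adjoint. Expanding $[\Pi_{AB},\Pi_{BC}]=0$ yields $\sum_{i,j} X_i \otimes [Y_i,Z_j] \otimes W_j = 0$, and linear independence of the $X_i$ and of the $W_j$ forces $[Y_i,Z_j]=0$ for every $i,j$. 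Let $\calM_1\subseteq \mathrm{End}(\calH_B)$ denote the unital $*$-subalgebra generated by $\{Y_i\}$ and $\calM_2$ that generated by $\{Z_j\}$. Then every element of $\calM_2$ commutes with every element of $\calM_1$, so $\calM_2\subseteq \calM_1'$.

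The second step is to apply the Artin--Wedderburn decomposition of a finite-dimensional unital $*$-subalgebra of $\mathrm{End}(\calH_B)$: there exists an orthogonal splitting $\calH_B = \bigoplus_x \calH_{B_x'}\otimes \calH_{B_x''}$ in which $\calM_1 = \bigoplus_x \mathrm{End}(\calH_{B_x'})\otimes I_{B_x''}$ and consequently $\calM_1' = \bigoplus_x I_{B_x'}\otimes \mathrm{End}(\calH_{B_x''})$. By construction $\Pi_{AB}\in \mathrm{End}(\calH_A)\otimes \calM_1$ and $\Pi_{BC}\in \calM_1'\otimes \mathrm{End}(\calH_C)$, so under this splitting they take the block forms $\Pi_{AB} = \bigoplus_x \Pi_{AB_x'}\otimes I_{B_x''}$ and $\Pi_{BC} = \bigoplus_x I_{B_x'}\otimes \Pi_{B_x''C}$. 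Because the direct summands act on mutually orthogonal subspaces and $\Pi_{AB}, \Pi_{BC}$ are themselves projectors, each block $\Pi_{AB_x'}$ and $\Pi_{B_x''C}$ is automatically a projector. Multiplying block by block yields $\Pi = \Pi_{AB}\Pi_{BC} = \bigoplus_x \Pi_{AB_x'}\otimes \Pi_{B_x''C}$, as required.

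The main obstacle is the structural step: one must verify that $\calM_1$ really is a unital $*$-algebra, so that Artin--Wedderburn delivers the commutant in the matching tensor form $\bigoplus_x I_{B_x'}\otimes \mathrm{End}(\calH_{B_x''})$. The $*$-closure is the reason I insist on self-adjoint generators $Y_i$, arranged via the Hermitian operator-Schmidt decomposition of the Hermitian operator $\Pi_{AB}$; unitality is secured by adjoining the identity, which does not disturb any commutation relation. After this, identifying the tensor structure of $\Pi$ from the block decomposition is routine bookkeeping.
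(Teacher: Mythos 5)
Your proof is correct; note that the paper does not reprove this proposition but simply cites it as a known structural result from Bravyi--Vyalyi~\cite{BV03}, and your argument --- passing to the commuting $*$-algebras on $\calH_B$ generated by the operator-Schmidt components of $\Pi_{AB}$ and $\Pi_{BC}$ and invoking the structure theorem for finite-dimensional $*$-algebras to obtain the block tensor decomposition --- is essentially the standard proof given in that reference. The points you flag (self-adjointness of the generators, unitality, and the blocks of a projector being projectors) are all handled correctly.
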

Note that some of the projectors in the above decomposition might be zero.

\begin{proof}[\it Proof of the Disentangling Lemma]
Consider a partition $\Lambda=ABCD$, where
\[
A=M\backslash \partial_-M, \quad B=\partial_- M, \quad
C=\partial_+M, \quad D=\bar{M}\backslash \partial_+ M.
\]
By definition,  $M=AB$ and $\bar{M}=CD$, see
Fig.~\ref{fig:recursion}. Using Eq.~(\ref{PI}) one can represent
$\Pi$ as a product of commuting projectors acting on $MC$ and $CD$.
Then Proposition~\ref{prop:2} implies that the Hilbert space of $C$
can be decomposed as \be \label{decompos} \calH_C = \bigoplus_x
\calH_{C'_x} \otimes \calH_{C''_x} \ee such that \be \Pi=\bigoplus_x
\Pi_{MC'_x}^{(x)} \otimes \Pi_{C''_x D}^{(x)}, \ee where
$\Pi_{MC'_x}^{(x)}$ and $\Pi_{C''_x D}^{(x)}$ are projectors. Since
$C$ is correctable, the direct sum over $x$ contains exactly one
term --- otherwise it would be possible to distinguish some
orthogonal encoded states by measuring $x$ which can be done locally
in $C$, see Eq.~(\ref{decompos}). Thus one can subdivide $C$ into
two subsystems $C=C'C''$ such that \be \label{PIpart1} \Pi=U_C
\left( \Pi_{MC'} \otimes \Pi_{C''D}\right) U_C^\dag. \ee Using
Eq.~(\ref{PI}) again one can represent $\Pi$ as a product of
commuting projectors acting on $AB$ and $B\bar{M}$.  Applying
the same arguments as above one arrives at \be \label{PIpart2}
\Pi=U_B \left( \Pi_{AB'} \otimes \Pi_{B'' \bar{M}}\right)
U_B^\dag \ee where $B=B'B''$ is a partition of $B$ into two
subsystems and $\Pi_{AB'}$, $\Pi_{B'' \bar{M}}$ are some
projectors. Define a new projector \be \label{PIprime} \Pi'=U_B^\dag
U_C^\dag \Pi U_B U_C. \ee Combining
Eqs.~(\ref{PIpart1},\ref{PIpart2}) one concludes that $\Pi'$ has a
product structure with respect to the partition
$(AB')(B''C')(C''D)$, that is, \be \Pi' = \Pi_{AB'} \otimes
\Theta_{B''C'} \otimes \Pi_{C''D} \ee for some projector
$\Theta_{B''C'}$. The error correction condition Eq.~(\ref{EC}) for
$M$ implies that $\Pi_{AB'}$ must be one-dimensional, since
otherwise one would be able to find a pair of  orthogonal codestates
which can be distinguished by acting only on $M$. Thus \be
\label{PI'} \Pi'=|\phi_{AB'}\ra\la \phi_{AB'} |\otimes
\Theta_{B''C'} \otimes \Pi_{C''D} \ee for some pure state
$|\phi_{AB'}\ra$. As for the projector $\Theta_{B''C'}$, the error
correction condition Eq.~(\ref{EC}) for $M$  and $C$ (separately) implies that
$\Theta_{B''C'}$ can be regarded as a codespace of an error
correcting code that corrects all errors on $B''$ and all errors on
$C'$. The no-cloning principle implies that $\Theta_{B''C'}$ must be
one-dimensional,  that is, \be \label{PI''} \Pi' = |\phi_{AB'}\ra\la
\phi_{AB'} | \otimes |\phi_{B''C'}\ra\la \phi_{B''C'}| \otimes
\Pi_{C''D} \ee for some pure state $|\phi_{B''C'}\ra$. Thus the
desired unitary operator $U_{\partial M}$ can be chosen as \be
U_{\partial M} = W_{B''C'} U_B^\dag U_C^\dag \ee where $W_{B''C'}$
is an arbitrary unitary operator disentangling the state $
|\phi_{B''C'}\ra$.
\end{proof}

\omitted{
\begin{proof}[\bf Proof of Corollary~\ref{cor:recursion}]
Applying the Disentangling Lemma to the region $M=AB$ we conclude
that there exists a unitary operator $U_{BC}$ and a pure state
$\eta_{AB}$ such that for any encoded state $\rho$ one has \be
\label{disentBC} \rho=U_{BC}(\eta_{AB} \otimes \eta_{CD} )
U_{BC}^\dag, \ee where $\eta_{CD}$ is some (mixed) state depending
on $\rho$. Taking the partial trace of Eq.~(\ref{disentBC}) over
$ABC$ we conclude that $\trace_C \eta_{CD} =\rho_D$. Therefore \be
\eta_{A} \otimes \rho_D = \calE(\rho), \ee where we introduced an
`error' $\calE$ that takes the partial trace over $BC$.
If $BC$ is
correctable, there exists a recovery operation $\calR$ such that
$\calR\circ \calE (\rho) = \rho$ for any encoded state $\rho$.
Therefore \be \rho=\calR(\eta_{A} \otimes \rho_D). \ee Since
$\eta_{A}$ is a known state independent of $\rho$, it means that
one can reconstruct $\rho$ starting from $\rho_D$. Therefore $ABC$
is a correctable region.
\end{proof}}
Our final lemma asserts that the union of two correctable regions
$M_1$ and $M_2$ that are sufficiently far apart is also correctable.
Note that this statement would be obvious if the error correction
would amount to the ``syndrome measurement", that is, measuring
eigenvalues of the constraints $\Pi_a$ and guessing the error based
on the measured syndrome. Indeed, an error acting on a region $M_i$
creates non-trivial syndrome only in a small neighborhood of $M_i$,
so the error corrections at $M_1$ and $M_2$ do not interfere with
each other.  Unfortunately, this intuition does not lead to a formal
proof, so we need to use different arguments similar to the ones
used in the proof of Lemma~\ref{lemma:disent}.
\begin{lemma}
\label{lemma:union}
Let $M_1,M_2\subseteq \Lambda$ be any correctable regions such that
any projector $\Pi_a$ overlaps with at most one of $M_1,M_2$.
Suppose that
$\partial_+ M_1$ is  also correctable.
Then the region $M_1\cup M_2$ is correctable.
\end{lemma}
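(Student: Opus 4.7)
The plan is to apply the Disentangling Lemma to $M_1$, use the projector-overlap hypothesis to conclude that the resulting boundary unitary acts trivially on $M_2$, deduce from this that $M_2$ remains correctable in the reduced code on $\bar M_1$ with projector $\Pi_{\bar M_1}$, and finally assemble an explicit recovery map for $M_1\cup M_2$.

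First I apply the Disentangling Lemma to $M_1$: by hypothesis both $M_1$ and $\partial_+ M_1$ are correctable, so there exist a unitary $U\equiv U_{\partial M_1}$ supported on $\partial M_1$ and a pure state $|\phi_{M_1}\ra$ with $U\Pi U^\dag = |\phi_{M_1}\ra\la\phi_{M_1}|\otimes\Pi_{\bar M_1}$; every encoded $\rho$ then takes the form $U^\dag(|\phi_{M_1}\ra\la\phi_{M_1}|\otimes\eta_{\bar M_1})U$ for a unique codestate $\eta_{\bar M_1}$ of the reduced code on $\bar M_1$. The key geometric observation is that $\partial M_1\cap M_2=\emptyset$: the hypothesis that no $\Pi_a$ overlaps both $M_1$ and $M_2$ forces $\partial_+ M_1\cap M_2=\emptyset$, while $\partial_- M_1\subseteq M_1$ is disjoint from $M_2$ (any overlap site would either contradict the same hypothesis or be untouched by every projector and can be reassigned freely). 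Consequently $U$ is the identity on $M_2$, the partial trace $\trace_{\bar M_2}$ commutes with conjugation by $U$, and tracing $\bar M_2$ on both sides of the disentangling identity yields $\rho_{M_2}=\eta_{M_2}$ for every encoded $\rho$. Correctability of $M_2$ in the original code ($\rho_{M_2}$ independent of $\rho$) therefore implies that $\eta_{M_2}$ is independent of $\eta$, i.e., $M_2$ is correctable in the reduced code; let $\calR'$ denote the corresponding recovery, $\calR'(\trace_{M_2}\eta_{\bar M_1})=\eta_{\bar M_1}$.

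To build the recovery for $M_1\cup M_2$ I would use the finer factorization established inside the proof of the Disentangling Lemma, namely $\bar M_1=C'C''D$ with $\Pi_{\bar M_1}=|\beta\ra\la\beta|_{C'}\otimes\Pi_{C''D}$, $M_2\subseteq D$, $U=W_{B''C'}U_B^\dag U_C^\dag$, $|\phi_{M_1}\ra=|\phi_{AB'}\ra\otimes|\alpha\ra_{B''}$, and $W|\phi_{B''C'}\ra=|\alpha\ra|\beta\ra$. A direct calculation, tracing first over $M_2\subseteq D$, then over $A$, then over $B=B'B''$, yields
\[
\trace_{M_1\cup M_2}\rho = U_C\bigl(\phi_{C'}\otimes\trace_{M_2}\tilde\eta_{C''D}\bigr)U_C^\dag,
\]
where $\phi_{C'}=\trace_{B''}|\phi_{B''C'}\ra\la\phi_{B''C'}|$ is a fixed state on $C'$ and $\tilde\eta_{C''D}$ is the subcode state with $\eta_{\bar M_1}=|\beta\ra\la\beta|_{C'}\otimes\tilde\eta_{C''D}$. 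The recovery then (i) undoes $U_C$ on $\overline{M_1\cup M_2}$, (ii) discards the known marginal $\phi_{C'}$ and replaces it by $|\beta\ra\la\beta|_{C'}$, (iii) invokes $\calR'$ (equivalently, the $M_2$-recovery in the subcode $\Pi_{C''D}$) to reconstruct $\tilde\eta_{C''D}$ and hence $\eta_{\bar M_1}$, (iv) appends fresh $M_1$ qudits prepared in $|\phi_{M_1}\ra$, and (v) conjugates by $U^\dag$ to reproduce $\rho$.

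The main technical obstacle is that the boundary unitary $U_{\partial M_1}$ reaches into the erased region $M_1$ via $\partial_- M_1$, so one cannot simply apply its inverse on the surviving marginal. This is circumvented in step (iv) by padding with fresh $M_1$ qudits in the known pure state $|\phi_{M_1}\ra$, which is legitimate precisely because the Disentangling Lemma guarantees that the disentangled codespace is a tensor product of a fixed pure state on $M_1$ and the reduced codespace on $\bar M_1$.
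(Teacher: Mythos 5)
Your proof is correct, and it shares the paper's key ingredients: you apply the decomposition obtained in the proof of the Disentangling Lemma to $M_1$, and you use the hypothesis that no $\Pi_a$ touches both regions to conclude $\partial M_1\cap M_2=\emptyset$ (the paper states this as disjointness of $M_1\cup\partial_+M_1$ and $M_2$), so that the disentangling unitary never reaches $M_2$. Where you differ is in how the conclusion is drawn. The paper never builds a recovery map: it verifies the Knill--Laflamme condition directly, showing $\Pi\, O_{M_1}\otimes O_{M_2}\,\Pi=f(O_{M_1})\,\Pi O_{M_2}\Pi\sim\Pi$ with $f$ a scalar functional coming from the fixed pure states $|\phi_{AB'}\ra,|\phi_{B''C'}\ra$, and then invokes correctability of $M_2$; this is a three-line algebraic argument. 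You instead construct the recovery channel explicitly, in the spirit of the paper's proof of Corollary~\ref{cor:recursion}: compute $\trace_{M_1\cup M_2}\rho=U_C(\phi_{C'}\otimes\trace_{M_2}\tilde\eta_{C''D})U_C^\dag$, argue that $M_2$ remains correctable in the reduced code $\Pi_{\bar{M}_1}$ because the marginal $\eta_{M_2}$ is constant, and then compose unitaries, state preparation on $M_1$, and the reduced-code recovery $\calR'$. The price of your route is that you implicitly invoke, in both directions, the standard equivalence between erasure-correctability of a region and constancy of the reduced state on that region over the codespace (i.e., the Knill--Laflamme condition for erasures); that equivalence is standard and consistent with how the paper argues elsewhere, but it should be stated, since it is exactly the fact the paper's shorter proof works with directly. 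The benefit is an operationally explicit recovery map, which makes transparent why erasing $\partial_-M_1$ along with $M_1$ causes no trouble (fresh qudits in the known state $|\phi_{M_1}\ra$ are re-appended before undoing $U_{\partial M_1}$), something the paper's algebraic proof leaves implicit.
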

\begin{proof}
It suffices to prove that
\be
\label{M1M2}
\Pi \, O_{M_1} \otimes O_{M_2} \, \Pi \sim \Pi
\ee
for any operators $O_{M_1}, O_{M_2}$ acting on $M_1, M_2$ respectively.
Indeed, since any projector $\Pi_a$ overlaps with at most one of $M_1,M_2$
the regions $M_1 \cup \partial_+ M_1$ and $M_2$ are disjoint.
Let us apply the decomposition described by Eqs.~(\ref{PIprime}-\ref{PI''}) to the region $M_1$.
It yields
\be
\label{f}
\Pi \, O_{M_1} \otimes O_{M_2} \, \Pi =f(O_{M_1}) \, \Pi O_{M_2} \Pi,
\ee
where
\be
f(O_{M_1})=
\la \phi_{AB'} \otimes \phi_{B'' C'} |O_{AB'B''}| \phi_{AB'} \otimes \phi_{B'' C'}\ra
\ee
and $O_{AB'B''}=U_B^\dag \,  O_{M_1} U_B$.
Since $M_2$ is correctable, Eq.~(\ref{f}) implies Eq.~(\ref{M1M2}).
\end{proof}


\section{Acknowledgments}
We thank Frank Verstraete for useful discussions.
Part of this work was done while the authors were visiting
the Erwin Schr\"odinger International Institute for Mathematical Physics
at Vienna.
SB and BMT were partially  supported by the
DARPA QUEST program under contract number HR0011-09-C-0047.
DP is supported in part by NSERC and FQRNT.

\section{Appendix A}
In this section we prove the bound Eq.~(\ref{classical}) for 2D classical codes and
demonstrate that this bound might be tight by inspecting properties of 2D classical codes
associated with 1D cellular automatons.

In the classical case each site of the lattice $u\in \Lambda$ is
occupied by a classical variable $x_u$ that can take a constant
number of values. The codespace $\calC$ is a set of all assignments
$x=\{x_u\}_{u\in \Lambda}$ that obey  geometrically local
constraints $\Pi_1(x)=1,\ldots,\Pi_m(x)=1$. A code encodes $k$ bits
with the distance $d$ iff $|\calC|=2^k$ and any pair of distinct
codewords differ at $d$ or more sites. Consider a partition
$\Lambda=AB$, where $B=B_1\ldots B_m$ consists of square-shaped
blocks of size roughly $\sqrt{d}\times \sqrt{d}$ such that the
number of sites in any block $B_i$ is smaller than $d$, see
Fig.~\ref{fig:classical}. We assume that the separation between the
blocks in $B$ is of order $w$, so that any constraint $\Pi_a$
overlaps with at most one block $B_i$. Let $x,y\in \calC$ be any
pair of codewords such that $x|_A=y|_A$. We claim that $x=y$.
Indeed, suppose $x$ and $y$ differ at some block $B_i$. Then there
exists a codeword $z\in \calC$ that coincides with $x$ inside $B_i$
and coincides with $y$ in the complement of $B_i$. It means that $z$
and $y$ are distinct codewords that differ at less than $d$ sites
which is a contradiction.

Let $\rho$ be the uniform distribution on $\calC$. We have $S(B|A)=0$ since
there is a unique way to extend a codeword from $A$ to $B$. Therefore
\[
k=S(\rho) =S(A) + S(B|A)=S(A)\le |A| \sim \frac{n}{d^{1/2}}.
\]
It proves Eq.~(\ref{classical}).
\begin{center}
\begin{figure}[htb]
\centerline{ \includegraphics[height=3cm]{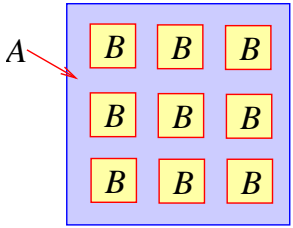}}
\caption{The partition $\Lambda=AB$.}
 \label{fig:classical}
\end{figure}
\end{center}
In the rest of the section we describe a family of 2D linear codes associated with 1D cellular automatons (CA) that
 are quite close to saturating the bound Eq.~(\ref{classical}). To the best of our knowledge the idea of CA-based codes was originally introduced
in Ref.~\cite{Chowdhury94}. A very similar construction  has also been used in Ref.~\cite{Moore99}
as an exactly solvable model of a 2D spin glass.

Let us start from considering  a semi-infinite lattice $\Lambda=\ZZ \times [0,L-1]$.   Let $x_i^t\in \{0,1\}$ be a classical bit
living at a site $(i,t)\in \Lambda$.
We shall refer to the coordinates $i$ and $t$ as {\em space} and {\em time} respectively.
Let $\{0,1\}^\Lambda$ be the set of all bit assignments
$\{x_i^t\}_{(i,t)\in \Lambda}$ with a finite Hamming weight.
Define a code
\bea
\label{CA1}
\calC_{\infty}^{L}&=&\{ x\in \{0,1\}^{\Lambda} \, : \, x^{t+1}_i = x^t_{i-1} \oplus x^t_{i+1} \nn \\
&& \forall i\in \ZZ, \quad \forall t\in [0,L-2]\}
\eea
Note that all constraints are linear and involve a triple of bits located close to each other.
 Clearly there is a one-to-one correspondence between
the codewords of $\calC_{\infty}^{L}$ and computational histories of a 1D linear cellular automaton (CA)  with
transition rules $x_i \to x_{i-1} \oplus x_{i+1}$. Accordingly, any codeword  $x\in \calC_{\infty}^{L}$ is uniquely
determined by the restriction of $x$ onto the first row of the lattice which determines the initial conditions
for the CA at $t=0$. It means that the code $\calC_{\infty}^{L}$ has $1$ encoded bit per unit of length along the space axis.
Since the code $\calC_{\infty}^{L}$ is linear, its distance $d$ is just the minimum Hamming weight of a non-zero
codeword $x\in \calC_{\infty}^{L}$.

\begin{lemma}
\label{lemma:CA1} Let $d(p)$ be the distance of the code $\calC_{\infty}^{L}$ defined on
a lattice of height $L$, where $L=2^p$ for some integer
$p$. Then $d(p)=3^p$.
\end{lemma}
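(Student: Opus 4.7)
The plan is to prove the matching bounds $d(p) \le 3^p$ and $d(p) \ge 3^p$ separately, the lower bound being the harder direction.

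\textbf{Upper bound.} I would exhibit an explicit codeword of weight $3^p$ by taking the initial condition $x^0_i = \delta_{i,0}$. The CA rule $x^{t+1}_i = x^t_{i-1} \oplus x^t_{i+1}$ yields Pascal's triangle modulo $2$: by induction on $t$, $x^t_i = \binom{t}{(t+i)/2} \bmod 2$ when $i+t$ is even and $|i| \le t$, and $0$ otherwise. The total number of odd entries in rows $0, 1, \ldots, 2^p - 1$ is exactly $3^p$ by the classical Sierpinski recursion $W(p) = 3\,W(p-1)$ with $W(0) = 1$, since the $2^p$-row Sierpinski triangle decomposes into three copies of the $2^{p-1}$-row triangle (the top-right quadrant of the enclosing square being identically zero).

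\textbf{Lower bound.} I would induct on $p$, with the base case $p=0$ immediate. For the inductive step, given a non-zero codeword $x\in \calC_\infty^{2^p}$, split it into a top half on rows $[0,2^{p-1}-1]$ and a bottom half on rows $[2^{p-1},2^p-1]$; each half is itself a codeword in $\calC_\infty^{2^{p-1}}$. The characteristic-$2$ Freshman's-dream identity $(X+X^{-1})^{2^{p-1}} = X^{2^{p-1}} + X^{-2^{p-1}}$ gives $x^{2^{p-1}}_i = x^0_{i-2^{p-1}} \oplus x^0_{i+2^{p-1}}$, so the bottom's initial condition is, up to a global spatial shift, the polynomial $y' = y(1+X^{2^p}) = y \oplus \mathrm{shift}_{2^p}(y)$, where $y = x^0$ is the top's initial condition. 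Because $\mathbb{F}_2[X, X^{-1}]$ is an integral domain, $y' = 0$ iff $y = 0$, so both halves are non-zero and the inductive hypothesis applies to each.

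Naively applying the IH to each half only yields $|x| \ge 2\cdot 3^{p-1}$, so the crux is to close the remaining factor-of-$3/2$ gap by proving the sub-claim that the bottom half alone has weight $\ge 2\cdot 3^{p-1}$. View the bottom codeword as the XOR of the codeword from $y$ with its horizontal translate by $2^p$; since the CA cone has half-width only $2^{p-1}-1$ after $2^{p-1}$ time steps, the two translated cones remain disjoint in every row whenever $\mathrm{supp}(y)$ has width at most $3$, and in that case the bottom-half weight is exactly twice that of the codeword from $y$, hence at least $2\cdot 3^{p-1}$ by the IH. The hard part I anticipate is the opposite regime, where $y$ is spatially wider and the two shifted codewords overlap with possible destructive interference; I would handle this either by strengthening the inductive hypothesis to also bound the codeword weight arising from the widely-shifted initial condition $y(1+X^{2^{p+1}})$, or by a direct case analysis showing that wider initial supports force the top-half weight itself to exceed $3^{p-1}$ by enough to compensate for any cancellation in the bottom.
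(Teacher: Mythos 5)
Your upper bound is correct and is the same argument as the paper's (single active cell, Pascal's triangle mod $2$, Sierpinski weight $3^p$). The lower bound, however, has a genuine gap. Your strategy of cutting the lattice in time into a top and a bottom half only gives $2\cdot 3^{p-1}$ from the induction hypothesis, and you correctly identify that everything hinges on the sub-claim that the bottom half alone has weight at least $2\cdot 3^{p-1}$. But you only establish this in the regime where the two copies $y$ and its shift by $2^p$ generate disjoint light cones (and even there the stated condition is slightly off: for disjointness in every row up to $t=2^{p-1}-1$ the support of $y$ must span at most $2$ sites, not $3$). The opposite regime, where the shifted codewords overlap and can cancel, is exactly where the difficulty of the lemma lives, and for it you offer only two possible strategies (strengthening the induction hypothesis, or a case analysis) without carrying either out. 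As it stands the argument proves only $d(p)\ge 2^p$-type bounds, not $d(p)\ge 3^p$, so the proposal is incomplete.

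The paper avoids the cancellation problem entirely by a different decomposition: restrict to the even sublattice and split it into four interleaved sublattices $A,B,C,D$ indexed by $(i+t)\bmod 4$ and $t\bmod 2$. The doubled rule $x^{t+2}_i=x^t_{i-2}\oplus x^t_{i+2}$ shows that the code $\calC_{\infty}^{L}$ reproduces itself on each of these sublattices, each isomorphic to the lattice of height $2^{p-1}$. Because the four sublattices are disjoint sets of sites, the weights of the restrictions simply add --- there is no interference to control --- and the whole inductive step reduces to the combinatorial fact that a non-zero codeword must occupy at least three of the four sublattices, giving $d(p)\ge 3\,d(p-1)$. If you want to salvage your time-halving approach, you would need to prove your sub-claim about initial conditions of the special form $y(1+X^{2^p})$ in full generality, including the overlapping case; the sublattice decomposition is the cleaner way to make the factor of $3$ appear.
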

\begin{proof}
Clearly $\calC_{\infty}^{L}$ consists of two independent codes defined on the even and odd sublattices of $\Lambda$.
Let $\Lambda_0$ be the even sublattice, i.e., a set of all sites $(i,t)\in \Lambda$ such that $i+t$ is even.
It suffices to bound the distance of the code $\calC_{\infty}^{L}$ restricted to $\Lambda_0$.
Consider a partition $\Lambda_0=ABCD$ where
\bea
A&=&\{ (i,t) \, : \, i+t=0 \bmod{4}, \quad t=0 \bmod{2}\} \nn \\
B&=&\{ (i,t) \, : \, i+t=2 \bmod{4}, \quad t=0 \bmod{2}\} \nn \\
C&=&\{ (i,t) \, : \, i+t=0 \bmod{4}, \quad t=1 \bmod{2}\} \nn \\
D&=&\{ (i,t) \, : \, i+t =2 \bmod{4}, \quad t=1 \bmod{2}\} \nn
\eea
Note that each of the above sublattices  is isomorphic to the original lattice
$\Lambda_0$ of height $2^{p-1}$.
Using the transition rules $x^{t+1}_i=x^t_{i-1}\oplus x^t_{i+1}$
one easily gets $x^{t+2}_i = x^t_{i-2} \oplus x^t_{i+2}$, that is, the code $\calC_{\infty}^{L}$
reproduces itself on each of the sublattices $A,B,C,D$.  We conclude that
\be
d(p)\ge \Gamma \, d(p-1)
\ee
where $\Gamma$ is the minimum number of sublattices $A,B,C,D$ that can be occupied by a non-zero codeword.
Simple combinatorial analysis shows that $\Gamma\ge 3$, that is, $d(p)\ge 3^p$.

To get the matching upper bound on $d(p)$ consider a codeword $x\in \calC_{\infty}^L$
generated starting from a state with a single active cell, i.e., a codeword corresponding to the initial
conditions $x^0_i=\delta_{i,1}$.
One can easily check that the support of $x$ is a discrete version of the Sierpinski triangle fractal which has Hamming weight
$3^p$. Thus $d(p)=3^p$.
\end{proof}
Consider now a finite lattice $\Lambda=\ZZ_L \times [0,L-1]$ with periodic boundary conditions
along the space axis and open boundary conditions along the time axis.
Define a finite version of the  code $\calC_{\infty}^L$ by the  constraints
\be
\label{CA1finite}
x^{t+1}_i = x^t_{i-1} \oplus x^t_{i+1}, \quad  x^0_{0}=0,
\ee
which must hold for all $i\in \ZZ_L$ and for all $t\in [0,L-2]$.
Let us denote the corresponding code $\calC^L_L$.
We shall restrict ourselves only to odd values of $L$. One can easily check that for odd $L$
the transition rule $x_i\to x_{i-1} \oplus x_{i+1}$ is essentially reversible: a pair of distinct
initial states $\{x^0_i\}_{i\in \ZZ_L}$ and $\{y^0_i\}_{i\in \ZZ_L}$ can evolve into the same state after a finite number of steps iff  $x^0_i=y^0_i\oplus 1$ for all $i\in \ZZ_L$.
The additional constraint $x^0_0=0$ thus guarantees that distinct codewords of $\calC_{L}^{L}$ have distinct
restrictions on every time slice of the lattice.
By construction,
the modified code $\calC_{L}^{L}$ encodes $k=L-1$ bits into $n=L^2$ bits.
We have computed the distance $d$ of the code $\calC_{L}^{L}$ numerically using  the exhaustive search optimization for odd
values of $L$ in the interval $5\le L\le 23$, see Fig.~\ref{fig:distance_plot}.
It was checked that  for all considered values of $L$ one has $d=d'$, where $d'$
is the Hamming weight of a codeword generated starting from a state  with  a single active cell,
that is, with the initial conditions $x^0_i=\delta_{i,1}$.
Since we have also shown that $d=d'$ for the semi-infinite lattice, see the proof of Lemma~\ref{lemma:CA1},
it is natural to conjecture that $d=d'$ for all odd values of $L$.
Computing $d'$ numerically for lattice sizes up to $L\sim 10^4$ we have found
$d'\sim L^{1.584}$ which agrees perfectly with the scaling
\be
\label{CA1distance}
d\sim L^{\log_2 3} \approx L^{1.585} \approx n^{0.793}
\ee
that was derived in Lemma~\ref{lemma:CA1} for the semi-infinite lattice.
Summarizing, the code $\calC_{L}^{L}$ encodes $k=\sqrt{n}-1$ bits into $n$ bits with the distance
$d\approx n^{0.793}$. Note that $k\sqrt{d} \sim n^{0.897}$ which is quite close to saturating the bound Eq.~(\ref{classical}).

\begin{center}
\begin{figure}[htb]
\centerline{
\mbox{
 \includegraphics[height=5cm]{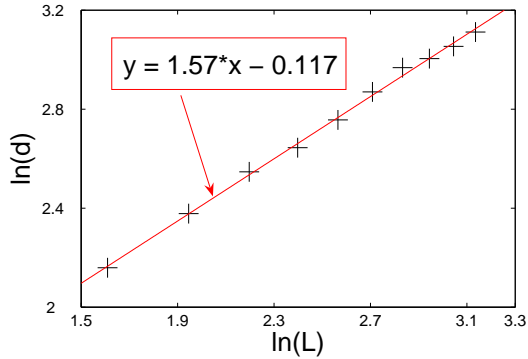}
 }}
 \caption{The distance of the code $\calC_{L}^{L}$ defined on a lattice
 $\ZZ_L\times [0,L-1]$ computed numerically using the exhaustive search for
 odd $L$ in the interval $5\le L\le 23$.}
 \label{fig:distance_plot}
\end{figure}
\end{center}


\end{document}